\newtheorem{thm}{Theorem}
\begin{document}

\title{\Large MaxRank: Discovering and Leveraging the Most Valuable Links for Ranking
}
\author{Hengshuai Yao \\
University of Alberta\\
hengshua@cs.ualberta.ca
\and 
}
\date{}

\maketitle

\begin{abstract}
On the Web, visits of a page are often introduced by one or more valuable linking sources.
Indeed, good back links are valuable resources for Web pages and sites.
We propose to discovering and leveraging the best backlinks of pages for ranking. 
Similar to PageRank, MaxRank scores are updated {recursively}. 
In particular, 
with probability $\lambda$, 
the MaxRank of a document is updated from the backlink source with the maximum score;
with probability $1-\lambda$, 
the MaxRank of a document is updated from a random backlink source.
MaxRank has an interesting relation to PageRank.
When $\lambda=0$, MaxRank reduces to PageRank; 
when $\lambda=1$, MaxRank only looks at the best backlink it thinks.
Empirical results on Wikipedia shows that 
the global authorities are very influential; 
Overall large $\lambda$s (but smaller than $1$) perform best: 
the convergence is dramatically faster than PageRank, but the performance is still comparable.
We study the influence of these sources and propose a few measures such as the times of being the best backlink for others, and related properties of the proposed algorithm. 
The introduction of best backlink sources provides new insights for link analysis. 
Besides ranking, 
our method can be used to discover the most valuable linking sources for a page or Website, 
which is useful for both search engines and site owners.

\if0
when measured only with the scores of the backlink sources (as in MaxRanked), the global authorities are very influential;
However, when measured with the proportion of the scores contributed by the backlink sources (as in MaxRanking), 
most influential sources are local authorities, which tend to relate closely in concept to the influenced pages.
Overall a large $\lambda$ (but smaller than $1$) performs best for both algorithms:
the convergence is dramatically faster than PageRank, and the performance is comparable.

For MaxRanking,
(a) The sources of the best back links are closely related to the articles pointed to.
(b) The best back links are from non-category articles.
(c) The distribution of the times of being the most influential back links follows an exponential function.  

Besides ranking, 
our method can be used to discover the most valuable pages and sites for a page or Website, which is useful for both search engines and site owners.

\fi
\end{abstract}

\if0
A key observation leading to this paper is that, In Web, visits of a page are often introduced by one or more valuable linking pages.
Indeed, a good backlink is one of the valuable resources of a page. 
We propose to discovering and leveraging the best backlinks of pages for ranking. 
The {\em MaxRank} algorithm we introduce in the next section is an automatic procedure for such.

\begin{abstract}
In this paper, we present {\em MaxRank}, an algorithm that automatically discovers and uses the most influential links in large databases of hyperlinked documents for authority ranking. 
The observation leading to this paper is that a document usually has one or more most contributive back links to its authority score. 
Similar to PageRank, MaxRank scores are updated {recursively}. 
In particular, 
with probability $\lambda$, 
the MaxRank of a document is updated from the maximum score from the backlinks;
with probability $1-\lambda$, 
the MaxRank of a document is updated from the MaxRank of the backlink documents.
MaxRank has an interesting relation to PageRank.
When $\lambda=0$, MaxRank reduces to PageRank; 
when $\lambda=1$, MaxRank only looks at the ``best'' backlink it thinks.
We propose two versions of MaxRank. 
Both versions are guaranteed to converge for all $\lambda\in[0,1]$.
We also show that overall a large $\lambda$ (but smaller than $1$) performs best for MaxRank:
the convergence is dramatically faster than PageRank, and the results are at least comparable. 

Experiments are run on Wikipedia and DBLP, 
and comparison is performed on the top-$k$ list and Kendal's tau. 

Interesting questions: a few of them. answer a few of them is fine. You can study them together and submit to a journal. 

\fi





%






\section{Introduction}\label{sec:intro}
The gigantic size and diverse content of modern databases have made ranking algorithms fundamental components of search systems \citep{searchweb}. 
The link analysis approach to ranking has been proven to be very effective in evaluating the qualities of Webpages \citep{HITS,Page98}, 
with widely practice from industry and intensive studies from academics. 
The success has proven that the hyperlinks on the Web are useful in finding high quality sources, which is hard based only on the content of pages.
PageRank and HITS are two seminal algorithms in literature. 
PageRank finds {\em authorities} which are the pages frequently visited by a random surfer. 
HITS finds both {authorities} and {hubs}, which are defined recursively---
the {\em authorities} are frequently linked by the {\em hubs} which turn out to be the pages frequently linked by authorities.
In this paper, we will be focused on finding authorities in the spirit of PageRank, though our techniques may also apply to HITS and other link analysis algorithms. 

{\em PageRank}.
In PageRank formulation, with probability $c$, 
a random surfer model follows the links on a page uniformly at random, 
and with probability $1-c$, the surfer model jumps to a new page selected uniformly at random from the database. 
The PageRank value of a page is defined as the probability of visiting the page in the long run of the random walk, e.g., see \citep{Page98,PR_survey,deeper04,pr_survey3,amy06,liu07}.

Suppose there are $N$ documents  in the database. 
All vectors are column vectors. The transpose of a matrix $X$ is denoted by $X^T$.
We need the following notations. 

$L$ be an adjacency matrix of the database.
That is, $L(i,j)=1$ if there is a link from document $i$ to document $j$, otherwise $L(i,j)=0$, $i,j=1,2, \ldots, N$;

$\bar{L}$ be a row normalized matrix of $L$;

$e$ be a vector of all $1$s, and $v$ be a vector of probabilities that sum to one; and

$S$ be a stochastic matrix such that $S=\bar{L}+ (a  e^T/N)$, 
where $a_i=1$ if document $i$ is dangling (i.e., document $i$ has no forward link)
and $0$ otherwise.

The transition probability matrix used by PageRank is 
\[
G=c S+ (1-c) ev^T,
\]
where $v$ (often called the {\em teleportation vector}) is a probability vector that sums to one.
Matrix $G$ is sometimes called the {\em Google matrix} in literature \citep{amy06}. 
One merit of the Google matrix is that it is stochastic and primitive and thus its steady state distribution (also called the stationary distribution) exists. 
In fact, 
PageRank (denoted by $\pi$) is exactly the steady state distribution vector of $G$, satisfying 
\[
{\pi}=G^T{\pi}.
\]
The other merit of $G$ is that it does not have to be stored, 
and the power iteration of computing ${\pi}$ can take advantage of the rank-1 matrix $ev^T$, 
manipulating $S$, $c$, $e$ and $v$ directly, e.g., see \citep{Taherthesis}.

Considerable efforts have been devoted to the computation problem of PageRank due to its large scale applications. 
This is especially important when one wants to compute multiple PageRank vectors depending on queries and users. 
For a detailed discussion, please refer to Section \ref{sec:discussion}.
In this paper,
we present a method utilizing the best backlinks, which have a much faster convergence than PageRank but the performance is still comparable. 
We are interested in understanding the roles of these influential links and their implication for link analysis algorithms especially PageRank. 

\if0
The remainder of this paper is organized as follows. 
Section \ref{sec:questions} summarizes several research questions that we will address in this paper. 
Section \ref{sec:maxrank} defines the best back links and provides a link analysis method that takes advantage of them. 
Section \ref{sec:wiki} provides empirical answers to the questions on Wikipedia. 
We present a discussion of relevant research in Section \ref{sec:discussion} before concluding the paper in Section \ref{sec:con}.
\fi
\section{Research Questions}\label{sec:questions}
Link analysis takes advantage of the linking information in calculating document importances. 
For example, Pagerank uses the back links of a document in updating its score. 
Intuitively, there are {\em influential} links which contribute a large portion to the score, 
and there are unimportant links which only contribute a negligible portion.  
We would like to ask the following questions. 

\begin{itemize}

\item 
Where are the influential links from? 
What types of documents are the influential sources? 
Are they authorities, hubs, or anything else?
What relations are they to the nodes that are influenced by them?

\item 
How many such influential sources are there?

\item 
How influential is a backlink to the score of a document?
Most importantly, how influential are those influential back links?

\end{itemize}

These questions are interesting for all link analysis algorithms. 
In this paper, we will be dealing with PageRank.  
By answering these questions, we wish to gain insights into the connectivity of large, real-world graphs and the quality of documents, and provide a better ranking.
A result of this study is a ranking method that takes advantage of the most influential back links to discover authorities and communities.

\section{The Best Back Links and MaxRank}\label{sec:maxrank}
In the case of PageRank-style authority discovery, 
a natural definition of the {\em best back link} of a page is the one with the largest score.  
\if0
Depending on how the contribution score is defined, we have two variants of the definition that will be illustrated shortly. 
\fi

We discover the best back links in the same process of authority score update, giving a so-called {\em MaxRank} method.  
The basic idea of this algorithm is, 
with probability $\lambda$, 
the contributing score comes from the best backlink of the page;
with probability $1-\lambda$, 
the contributing scores come from a random backlink of the page.

\subsection{The Algorithm}
In particular, MaxRank of a page $j$ ($j =1,2, \ldots, N$) is defined by
\begin{align}
{R}(j) = & c \left[ \lambda P(i^*,j){ {R}(i^*)}  + (1-\lambda) \sum_{i\in \mathcal{B}(j)}{ P(i,j){R}(i)} \right] \label{MaxRgen2}   \\
+ &  (1-c)v(j), \nonumber 
\end{align}
where 
\[
i^*=\arg\max_{i\in \mathcal{B}(j)}{ {R}(i)},
\]
$\lambda \in [0,1]$,
$\mathcal{B}(j)$ is the set of backlink pages of page $j$,
and $P({i,j})$ is the probability of going from page $i$ to page $j$,

 \subsection{Convergence of MaxRank}
 In this section, we first give a theorem showing that both variants of MaxRank are well defined. 
 A straightforward application of this theorem is that power iteration of computing MaxRank is guranteed to converge for $\lambda \in [0,1]$.

\begin{thm}\label{thm:well}
 For $c\in(0,1)$ and $\lambda \in [0, 1]$, 
 MaxRank is well defined.
 \end{thm}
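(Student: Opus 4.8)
The plan is to read the theorem as the assertion that the nonlinear update \eqref{MaxRgen2} has a unique fixed point toward which the power iteration converges; that is, writing the right-hand side of \eqref{MaxRgen2} as an operator $T:\mathbb{R}^N\to\mathbb{R}^N$, I want to show that $T$ admits a unique $R$ with $R=T(R)$ and that $R_{k+1}=T(R_k)$ converges to it from every start. The natural tool is the Banach fixed-point theorem, so the argument should reduce to exhibiting a contraction factor strictly below $1$.

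First I would linearize the $\max$. Fix a \emph{selection} $b$ that assigns to each page $j$ one backlink $b(j)\in\mathcal{B}(j)$, and let $T_b$ be the map obtained from \eqref{MaxRgen2} by replacing $i^*$ with $b(j)$. Then $T_b(R)=c\,W_b^{T}R+(1-c)v$ is affine, with $W_b(i,j)=\lambda P(i,j)\mathbf{1}\{i=b(j)\}+(1-\lambda)P(i,j)$. Since each row of $P$ sums to at most $1$, the row sums satisfy $\sum_j W_b(i,j)=\lambda\sum_{j:b(j)=i}P(i,j)+(1-\lambda)\sum_j P(i,j)\le 1$, uniformly in $b$. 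This is the crucial estimate: it gives $\|W_b^{T}x\|_1\le\|x\|_1$ and hence $\|T_b(R)-T_b(R')\|_1\le c\,\|R-R'\|_1$, so every $T_b$ is a $c$-contraction in $\ell_1$ with the \emph{same} modulus. In particular the claim is immediate at $\lambda=0$, where $T$ is selection-independent and coincides with the PageRank map.

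The key step is passing from the family $\{T_b\}$ to the genuine operator $T$, which applies at each iterate the selection $b(R)=\arg\max_{i\in\mathcal{B}(j)} R(i)$. Here I would first verify that $T$ keeps the iterates in the compact set of nonnegative vectors with $\ell_1$-norm bounded by a constant, so no escape to infinity is possible, and then use the uniform factor $c$ to control the successive differences $\|R_{k+1}-R_k\|_1$ on stretches where the active selection does not change, concluding that the iterates form a Cauchy sequence whose limit is a fixed point.

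The main obstacle, and the part I expect to need the most care, is precisely the change of the active selection $b(R)$. Because the $\arg\max$ is taken over the scores $R(i)$ while the contribution is then weighted by $P(i^*,j)$, the best-backlink term can switch between two backlinks of unequal out-degree; at such a ``tie'' the operator $T$ fails to be Lipschitz in the value, so a single global contraction estimate across selections is \emph{not} available and one cannot simply invoke Banach. My plan is therefore to localize: argue that near a fixed point the optimal selection stabilizes, so that $T$ eventually agrees with one fixed affine $c$-contraction $T_b$ and local geometric convergence takes over, while globally the uniform modulus $c<1$ prevents the best-backlink term from inflating the iterates. Making the stabilization-of-selection argument rigorous---or, alternatively, replacing it by a monotone two-sided iteration that squeezes the fixed point from above and below---is where the real work lies; the boundedness and per-selection contraction steps are routine once the row-sum estimate above is in hand.
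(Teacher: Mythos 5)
Your ``routine'' half is, in fact, the entirety of the paper's proof. The paper performs the same linearization you do: it writes $\mathbb{T}(R)=T\cdot R$, where $T(j,i)=P(i,j)$ if $i$ is the best backlink of $j$ and $T(j,i)=(1-\lambda)P(i,j)$ otherwise, and bounds the column sums by $\sum_j T(j,i)\le\sum_j P(i,j)\le 1$ --- exactly your uniform estimate on the matrices $W_b$. But the paper then stops: it asserts that $R=c\,\mathbb{T}(R)+(1-c)v$ is ``defined by a contraction mapping composed of $T$ and $c$'' and concludes only that ``$R$ is finite.'' It never confronts the fact that $T$ depends on $R$ through the argmax. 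What its estimate actually yields is the image bound $\|\mathbb{T}(R)\|_1\le\|R\|_1$, which shows any solution satisfies $\|R\|_1\le\|v\|_1$ and keeps iterates bounded, but it gives neither existence nor uniqueness of a fixed point, and certainly not Theorem \ref{thm:conv}, which the paper leaves unproved as ``similar.''

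So the obstruction you single out is genuine, and it is precisely the step the paper skips: since $i^*$ maximizes $R(i)$ while its contribution is weighted by $P(i^*,j)$, the operator jumps at ties --- if $j$ has backlinks $i_1,i_2$ with $P(i_1,j)\neq P(i_2,j)$, an arbitrarily small perturbation of $R$ across the tie $R(i_1)=R(i_2)=r$ changes $\mathbb{T}(R)(j)$ by about $\lambda\, r\,|P(i_1,j)-P(i_2,j)|$, so $\mathbb{T}$ is not even continuous and neither Banach nor Brouwer applies off the shelf. However, your own plan does not close this gap either: ``the selection stabilizes near a fixed point'' presupposes that a fixed point exists, which is the very thing in question (for a discontinuous map, bounded iterates with per-selection contraction can in principle cycle among selections forever), and the monotone two-sided iteration is named but not constructed. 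You and the paper are left with the same hole; the difference is that you see it. If you want a complete proof, the cleanest repair is to redefine the best backlink as maximizing the weighted contribution $P(i,j)R(i)$ rather than the raw score $R(i)$: then $\mathbb{T}(R)(j)=\lambda\max_{i\in\mathcal{B}(j)} P(i,j)R(i)+(1-\lambda)\sum_{i\in\mathcal{B}(j)} P(i,j)R(i)$ is a pointwise maximum of linear maps, the elementary bound $|\max_i P(i,j)R(i)-\max_i P(i,j)R'(i)|\le\max_i P(i,j)|R(i)-R'(i)|\le\sum_i P(i,j)|R(i)-R'(i)|$ combined with $\sum_j P(i,j)\le 1$ gives $\|\mathbb{T}(R)-\mathbb{T}(R')\|_1\le\|R-R'\|_1$, and Banach then delivers existence, uniqueness, and global geometric convergence of the power iteration in one stroke.
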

 
 \begin{proof}
 For notational convenience, we define
 \[
 \mathbb{T}(R,j) = \left[ \lambda P(i^*,j)\max_{i\in \mathcal{B}(j)}{ {R}(i)}  + (1-\lambda) \sum_{i\in \mathcal{B}(j)}{ P(i,j){R}(i)} \right]. 
 \]
 Accordingly, we have
 \[
 R(j) =c\mathbb{T}(R,j)  + (1-c)v(j)
 \]
 $j=1,2, \ldots, N$.
 In matrix form, we have
 \begin{equation}\label{RcT}
 R =c\mathbb{T}(R)  +(1-c)v,
 \end{equation}
 where $ \mathbb{T}(R)$ is a vector, with $\mathbb{T}(R)(j) = \mathbb{T}(R,j) $, $j=1, 2, \ldots, N$.
 
 Next we are to prove that $\mathbb{T}(R)$ is a non-expansion operator with respect to the $1$-norm, which means that 
 \begin{equation}\label{tmp1}
 ||\mathbb{T}(R) ||_{1} \le || R||_{1}.
 \end{equation}
 According to the definition of $\mathbb{T}(R)$, $\mathbb{T}(R)(j)$ and $\mathbb{T}(R,j)$, we have
 \[
 \mathbb{T}(R)= T \cdot R,
 \]
 where ${T}$ is a $N\times N$ matrix, with $T({j,i})=  P({i,j})$, if page $i$ is the best backlink of page $j$;
 otherwise ${T}({j,i})= (1-\lambda) P({i,j})$.
 
 Then the inequality (\ref{tmp1}) can be proven in the following steps:
 \begin{align*}
 ||\mathbb{T}(R) ||_{1} &\le   ||T||_{1}  ||R||_{1}\nonumber \\ 
 &=   \max_{i=1,2, \ldots, N} \sum_{j=1}^{N}{T}(j,i) ||R||_{1} \nonumber \\
 &\le  \max_{i=1,2, \ldots, N} \sum_{j=1}^{N} P(i,j) ||R||_{1} \nonumber \\
 &=   ||R||_{1} \nonumber 
 \end{align*}
 For the third equation, the equality holds when $i$ is the best backlink for all pages. 
 
 Thus $T$ is a non-expansion mapping in $1$-norm. 
 According to equation (\ref{RcT}), $R$ is defined by a contraction mapping composed of $T$ and $c$. Hence $R$ is finite. 
 \end{proof}
 
 The definition of MaxRank enables straightforward estimation using power iteration starting from any initial guess. 
 The convergence of power iteration is guaranteed following an argument similar to Theorem \ref{thm:well}.
 
  \begin{thm}[Convergence]\label{thm:conv}
 For $c\in(0,1)$ and $\lambda \in [0, 1]$, 
 power iteration of solving MaxRank converges to the true vector defined in (\ref{MaxRgen2}), irrespective of any initial vector. 
 \end{thm}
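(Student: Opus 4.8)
The plan is to recognize the power iteration as the fixed-point iteration $R_{k+1}=F(R_k)$ for the self-map
\[
F(R)=c\,\mathbb{T}(R)+(1-c)v,
\]
and to invoke the Banach contraction mapping theorem on the complete space $(\mathbb{R}^N,||\cdot||_1)$. Since the MaxRank vector of (\ref{MaxRgen2}) is by construction the solution of (\ref{RcT}), i.e.\ exactly a fixed point of $F$, establishing that $F$ is a contraction simultaneously delivers uniqueness of the limit and convergence from an arbitrary initial vector, which is the content of Theorem~\ref{thm:conv}.

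First I would show that $F$ contracts distances with modulus $c$:
\[
||F(R)-F(R')||_1 = c\,||\mathbb{T}(R)-\mathbb{T}(R')||_1 \le c\,||R-R'||_1,
\]
the additive teleportation term $(1-c)v$ cancelling so that the only object to control is $\mathbb{T}(R)-\mathbb{T}(R')$. The last step is the difference-form of the non-expansion estimate (\ref{tmp1}). Because $c\in(0,1)$, a bound $||\mathbb{T}(R)-\mathbb{T}(R')||_1\le||R-R'||_1$ makes $F$ a genuine contraction, whereupon Banach's theorem yields a unique fixed point $R^*$ together with the geometric estimate $||R_k-R^*||_1\le c^k||R_0-R^*||_1$, proving convergence to the true vector irrespective of $R_0$.

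The work beyond Theorem~\ref{thm:well} is to upgrade the non-expansion from the magnitude bound $||\mathbb{T}(R)||_1\le||R||_1$ to the difference bound above. Componentwise, $\mathbb{T}(R,j)-\mathbb{T}(R',j)$ splits into the linear averaging part $(1-\lambda)\sum_{i\in\mathcal{B}(j)}P(i,j)\,(R(i)-R'(i))$, for which the estimate is immediate and reproduces the column-sum argument of Theorem~\ref{thm:well}, and the $\lambda$-weighted max part. For the latter I would use that $\max$ is $1$-Lipschitz, $|\max_i R(i)-\max_i R'(i)|\le\max_i|R(i)-R'(i)|$, so that whenever the best backlink $i^*$ coincides for $R$ and $R'$ the term is bounded by $P(i^*,j)\,|R(i^*)-R'(i^*)|$. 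Summing both parts over $j$ and reusing that every column of the matrix $T$ of Theorem~\ref{thm:well} sums to at most one then yields the required $||R-R'||_1$ bound.

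The step I expect to be the main obstacle is controlling that max term across the \emph{switching} of the best backlink. When $i^*$ changes between $R$ and $R'$, splitting $P(i^*,j)\max_i R(i)-P(\hat{\imath},j)\max_i R'(i)$ leaves a coefficient difference $(P(i^*,j)-P(\hat{\imath},j))\max_i R'(i)$ that the Lipschitz bound on $\max$ alone does not absorb. The route I would pursue is to view $\mathbb{T}$ as piecewise linear, $\mathbb{T}(R)=T_{\sigma(R)}R$ indexed by the assignment $\sigma$ of a best backlink to each page, with $||T_\sigma||_1\le1$ for \emph{every} $\sigma$ by the column-sum estimate already used for (\ref{tmp1}); the contraction then holds with the uniform modulus $c$ on the interior of each piece, and the crux is to show the iteration cannot be trapped oscillating between pieces, i.e.\ that the coefficient jumps at ties do not destroy the net $1$-norm contraction. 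This is precisely the point at which the ``argument similar to Theorem~\ref{thm:well}'' must be made rigorous, and where I would concentrate the effort.
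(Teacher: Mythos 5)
Your strategy coincides with the paper's: the paper gives no separate proof of Theorem~\ref{thm:conv} at all, only the remark that it follows ``by an argument similar to Theorem~\ref{thm:well}'', and that argument, inspected closely, establishes only the magnitude bound (\ref{tmp1}), $\|\mathbb{T}(R)\|_1\le\|R\|_1$, before asserting that composition with $c$ yields a contraction. You have seen, correctly, that Banach's theorem needs the difference bound $\|\mathbb{T}(R)-\mathbb{T}(R')\|_1\le\|R-R'\|_1$, that the magnitude bound does not imply it (because the matrix $T_{\sigma(R)}$ changes with $R$ through the argmax assignment $\sigma$), and that the entire difficulty sits in the switching of $i^*$. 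In this respect your diagnosis is sharper than the paper's own text.

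However, the step you defer as ``the main obstacle'' is not a technicality that can be tightened; for the operator as defined it fails outright, so your proposal as it stands has a genuine gap (one it shares with the paper). The max term is $P(i^*,j)R(i^*)$ with $i^*=\arg\max_{i\in\mathcal{B}(j)}R(i)$: the weight $P(i^*,j)$ is selected by the argmax but does not sit inside the maximization, so $\mathbb{T}$ is discontinuous at ties whenever two backlinks of a page carry different weights --- the generic situation under the random surfer, where $P(i,j)=1/n_i$. Concretely, let $j$ have backlinks $i_1,i_2$ with $P(i_1,j)=1/10$ and $P(i_2,j)=1$, take $R(i_1)=1+\epsilon$, $R(i_2)=1$, and let $R'$ swap these two values, agreeing with $R$ elsewhere. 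Then $\|R-R'\|_1=2\epsilon$, while the $j$th components of $\mathbb{T}(R)$ and $\mathbb{T}(R')$ differ by $\frac{9}{10}\bigl[\lambda(1+\epsilon)+(1-\lambda)\epsilon\bigr]\ge\frac{9}{10}\lambda$; letting $\epsilon\to 0$ shows the difference bound fails for every $\lambda>0$, and indeed that $\mathbb{T}$ cannot be a non-expansion in differences, since a non-expansion would be continuous. Consequently Banach cannot be invoked, and your fallback --- uniform column-sum bounds $\|T_\sigma\|_1\le 1$ on the pieces --- cannot rule out the iteration crossing tie hyperplanes, because the affine pieces do not even agree on those hyperplanes. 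So neither your argument nor the paper's proves Theorem~\ref{thm:conv}; these arguments do not even establish that the fixed point in (\ref{MaxRgen2}) exists and is unique. The route does go through if the max term is redefined as $\max_{i\in\mathcal{B}(j)}\left[P(i,j)R(i)\right]$: then $|\max_i a_i-\max_i b_i|\le\max_i|a_i-b_i|\le\sum_i|a_i-b_i|$ gives the componentwise estimate, the column-sum argument finishes the non-expansion, and $F(R)=c\,\mathbb{T}(R)+(1-c)v$ is a genuine $c$-contraction; that modification, rather than more effort on the switching case, is what would make both your proof and the paper's correct.
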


We will consider the random surfer in the remainder of this paper. 
That is, the probability of going from a (non-dangling) page $i$ to a page $j$ is $1/n_{i}$, where $n_{i}$ is the number of (forward) links on page $i$.
In this case, it is noticeable that when $\lambda=0$, 
the algorithm reduces to PageRank.
When $\lambda=1$, the algorithm only considers the ``best'' backlink it finds and
ignores the contribution from the others. 
However, in our experience, 
this usually gives poor ranking results because the selected best backlink pages are usually not good in quality. 
\section{Empirical Results}\label{sec:wiki}
In this section, we study the proposed algorithm and questions on the Wikipedia English article dump, 
which contains about $6$ million pages (articles or categories). 
For all algorithms, $c=0.85$ was used. 
The teleportation probabilities were uniformly set to $1/N$.
All algorithms are updated by the standard power iteration. 
No sophisticated update is used for any algorithm.

Recall that we would like to study the following questions. 
{\em What are the sources of the best back links? How many are they? How influential are they?} 
For space limitation we show only the case of $\lambda=0.1$ in this paper. 

\subsection{Sources of the Best Back links}
\label{subsec:exp_wiki_Q2}

Table \ref{tab:MaxRanked08501} shows the sources of the best backlinks for the top-50 pages on Wikipedia, 
using algorithm MaxRank{\em ed} with $\lambda=0.1$. 
Note that this choice produces a similar scoring to PageRank, as will be shown later.
The sources of the best backlinks are mostly global authorities. 
The very top pages are seen to support many top pages. 
For example, ``United States'' influences many other concepts which further influence the remaining of the site. 
The effect is that this classifies the site into clusters of nodes, in each of which there are only a small number of dominant nodes. 

 \begin{figure}[t]
 \centering
 \includegraphics[width=2.5in]{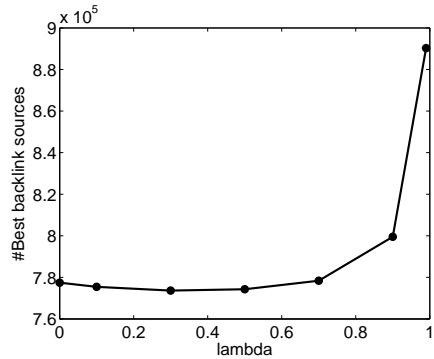}
 \caption{Number of best backlink sources on Wikipedia according to MaxRanked with $\lambda=0,0.1, 0.3, 0.5, 0.7, 0.9, 0.99$.
 }\label{fig:wiki:maxranked:lam:coresize}
 \end{figure}

There are only  $775,438$ unique backlink sources with MaxRank. 
They support the whole site and form a core. 
The size of this core is only about $0.7\%$ of the total number of links ($117,864,053$), and about $13.5\%$ of the total number of the pages ($5,743,047$). 
On average a core page ``supports'' about $3,620,343/775,438\approx 4.7$ pages. 
This is also an estimate of the average size of the clusters. 
The size of the best backlink core for various $\lambda$ is shown in Figure \ref{fig:wiki:maxranked:lam:coresize}.
$\lambda=0.3$ leads to the smallest core for this example. 
For $\lambda$ larger than $0.7$, the core size is much larger and increases much quicker with respect to $\lambda$.

 \begin{figure*}[t]
\begin{minipage}[t]{0.2\textwidth}
\centering
\vspace{0pt}
 \includegraphics[width=2.2in]{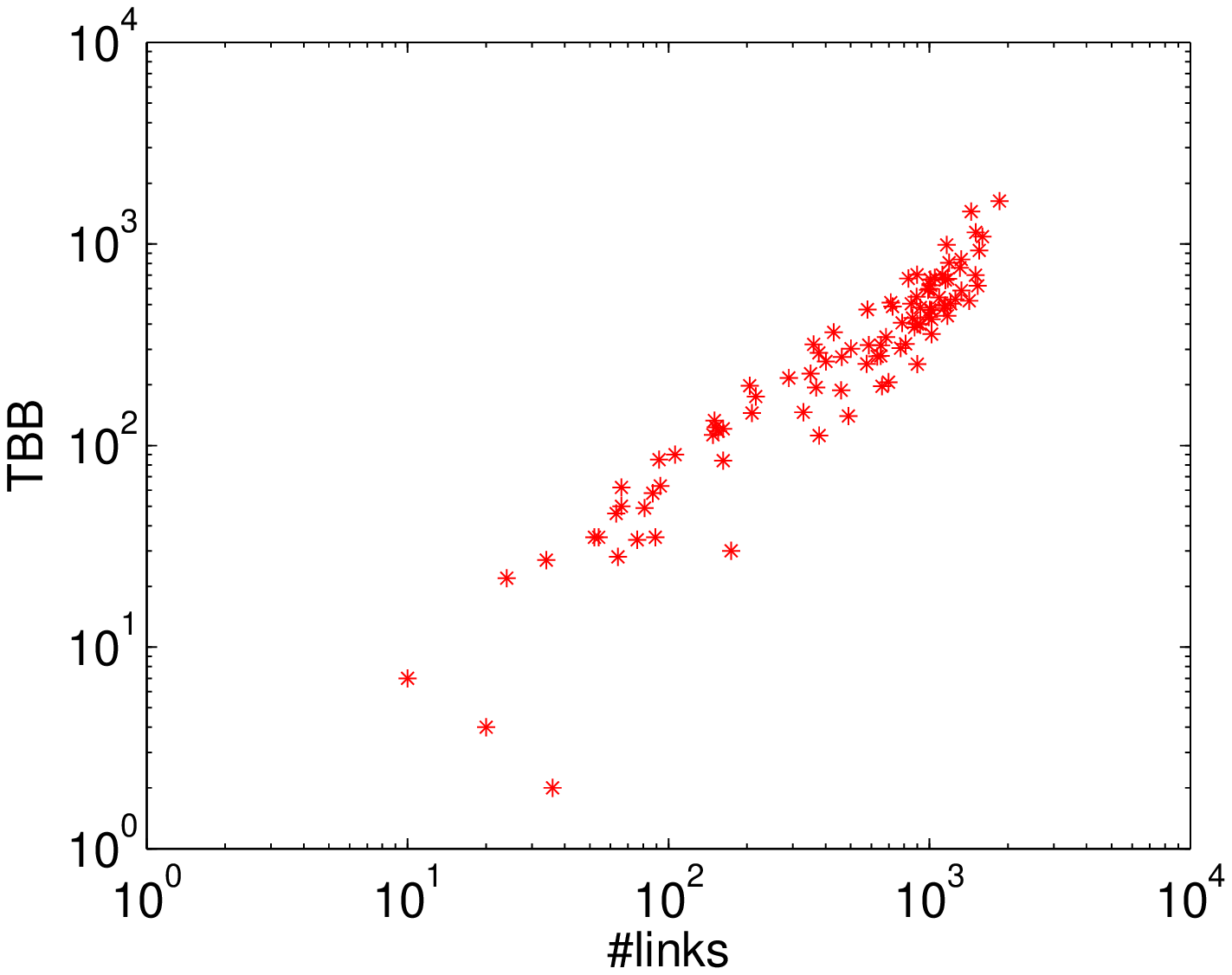}
\end{minipage}%
\hspace{1.8cm}
\begin{minipage}[t]{0.2\textwidth}
\vspace{0pt}
\centering
 \includegraphics[width=2.2in]{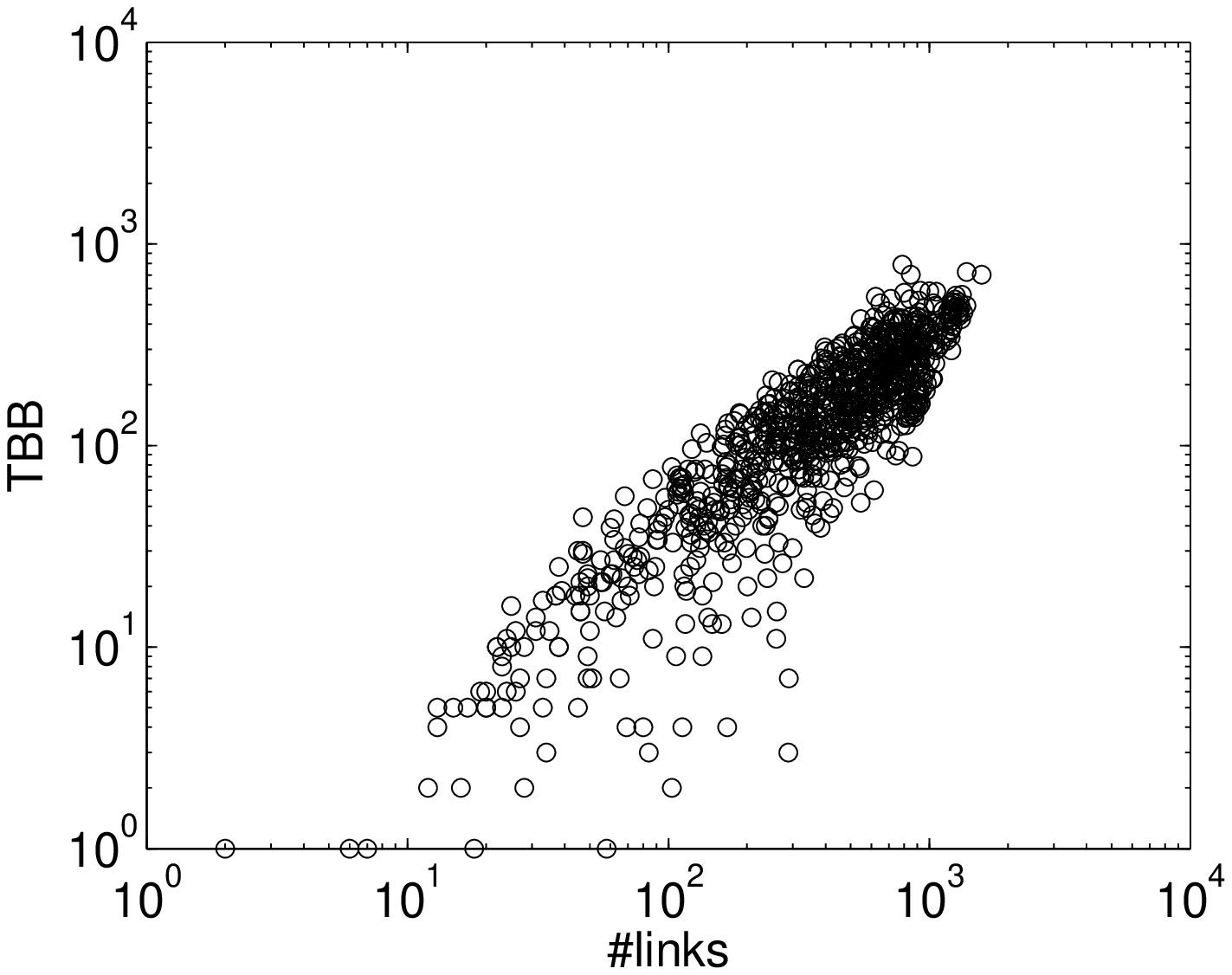}
\end{minipage}%
\hspace{1.8cm}
\begin{minipage}[t]{0.2\textwidth}
\vspace{0pt}
\centering
 \includegraphics[width=2.2in]{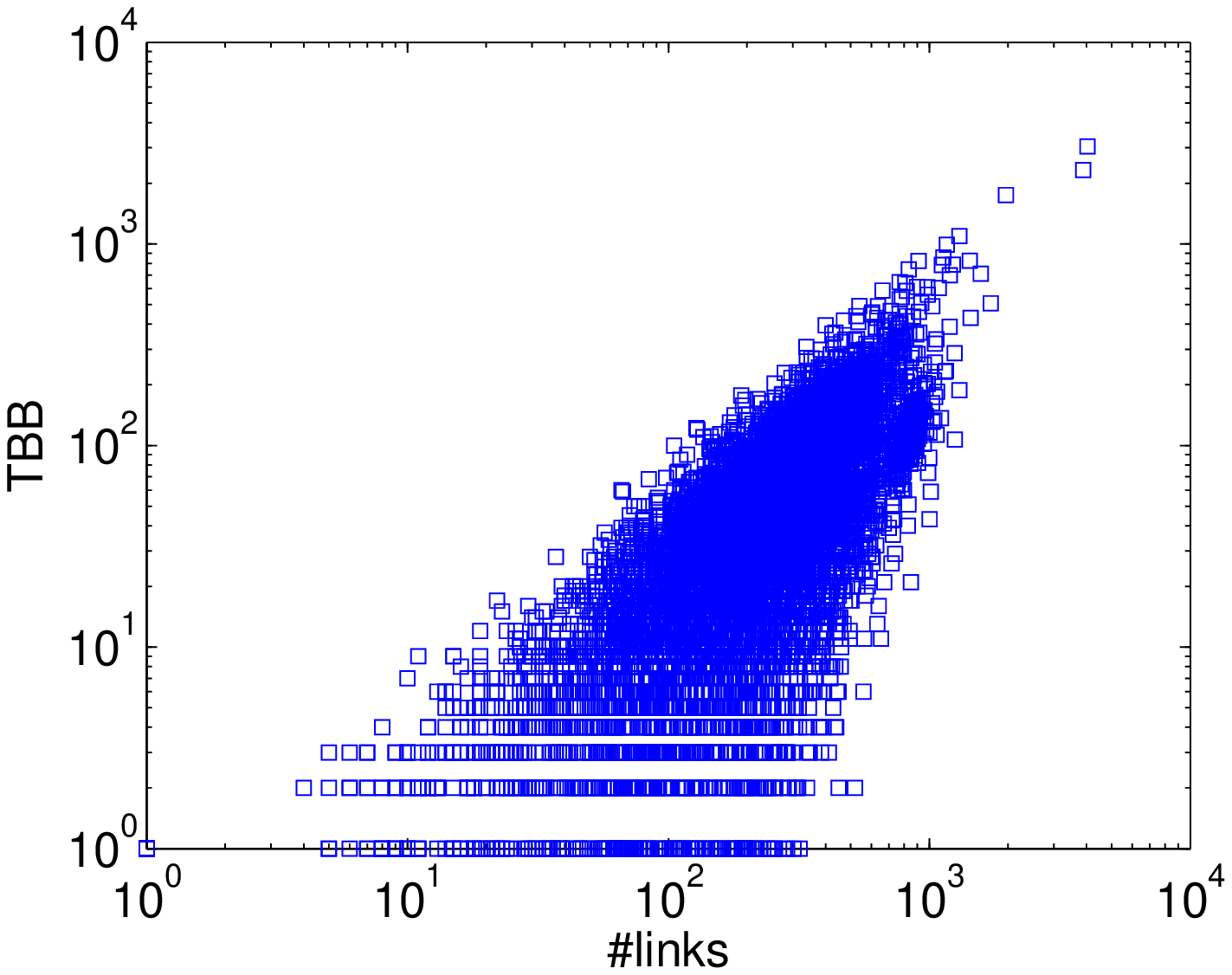}
\end{minipage}%
 \caption{Left to Right: The TBB of the top 1--100, 101--1000, 1001--10000 authorities on Wikipedia ($\lambda=0.1$).
 }\label{fig:wiki:maxranked:01:tbbtopauth}
 \end{figure*}

\subsection{Influence of the Best Backlink Sources}
We measure the influence of the best backlink sources in three distinct aspects. 
The first measure is the {\em collective influence} of the  best backlink sources in the graph, which is defined as 
the ratio of the sum of the scores of all the best backlink sources over the sum of the scores of all the pages. 
The {collective influence} of the core is $53.1\%$. Note that the number of core pages is only about $13.5\%$ of the whole graph. 
Thus the influence of the core is significant.

Different core sources have different strength of influence. 
Some contribute many best backlinks, while others only contribute a few. 
Thus this suggests a measure for influential sources, in particular, by the {\em times of being the best back link (TBB)} to other nodes. 
Note that the TBB of an influential page is equal to the number of pages that the page supports. 
Table \ref{tab:maxranked_times08501} shows the ordering of the sources according to the TBB measure. 
In addition, we also show in this table the ratio of TBB to the out-degree of the sources, which measures the percentage of competitive links cast by the sources. 
In this top list we see many hubs and authorities, and the number of hubs is more than the number of authorities. 
Thus on Wikipedia the more links an article has the more likely it is influential to others.

A log-log plot of the distributions of the out-degree and the TBB is shown in the left plot of Figure \ref{fig:wiki:maxranked:01:outdegrees_influencedratio_tbbnf}.
Some key observations are as follows.
First, the number of pages that have been the best backlink only a few times is very large, 
while the number of pages that have been the best backlink many times is very small. 
Second, the log-log curve of TBB distribution is more straight, which means the TBB distribution follows an exponential distribution in a more strict
way.
Third, 
it can be seen that for $x>10$, the two curves follow a similar exponential distribution with a close exponent, 
with a large shift in the $x$ direction which indicates that the TBB of a page is much smaller than the out-degree of the same page.

The (sorted) ratio between the TBB and the out-degree for each core page is shown in the middle plot of Figure \ref{fig:wiki:maxranked:01:outdegrees_influencedratio_tbbnf}.
First, the sources whose value of the ratio is smaller than $0.2$ are about $66\%$ of the total core. 
This means the majority of the core has only $20\%$ of their links being the best backlinks. 
Second, the number of those pages whose ratio is equal to $1.0$ is about $87,193$ ($11\%$ of the total core).
Astonishingly, $86,763$ ($99.5\%$) of them have only one link. 
This sheds lights on the structure of Wikipedia.
Most of them are due to the existence of ``redirect pages'' in Wikipedia, which contains no content but a ``link'' to another article.
Third, the remaining sources have a ratio larger than $0.2$.
Together with the nontrivial sources whose ratio is $1.0$, they form the most competitive link sources of the core.
They take about $23\%$ of the total core.
Of them, only $4,632$ sources have a ratio larger than 0.5,
and $360$ sources have a ratio larger than $0.8$.
In short,
{\em the number of nontrivial, competitive backlink sources is very small}.

The third is from the perspective of an ordinary page (either in the core or not in the core), a measure of being influenced by the best back link, by
the ratio of the score contributed by the best back link over the overall score of the page.  
We expect this measure can distinguish authorities.  
This ratio for all the pages is shown in the right plot of Figure \ref{fig:wiki:maxranked:01:outdegrees_influencedratio_tbbnf}. 
For authorities with high scores, this ratio is very small. 
Thus they are not easily influenced even by the best backlink source. 
As pages become less authoritative (along the negative direction of the $x$-axis), the values of this ratio become more diverse. 
For example, 
we can observe the values of this ratio cover almost the whole range of $(0, 1)$ for pages with a score equal to $10^{-5}$.

Figure \ref{fig:wiki:maxranked:01:tbbtopauth} shows the TBB versus the out-degree for the top authorities. 
For the very top-100 authorities, the curve is almost linear, and very close to $y=x$. 
(Note that all points are below $y=x$.)
Thus their links are very influential. 
Further down the ordering of the authorities, 
we can observe that there are more and more less influential pages.

 \begin{figure*}[t]
  \begin{minipage}[t]{0.2\textwidth}
  \vspace{0pt}
  \centering
\includegraphics[width=2.2in]{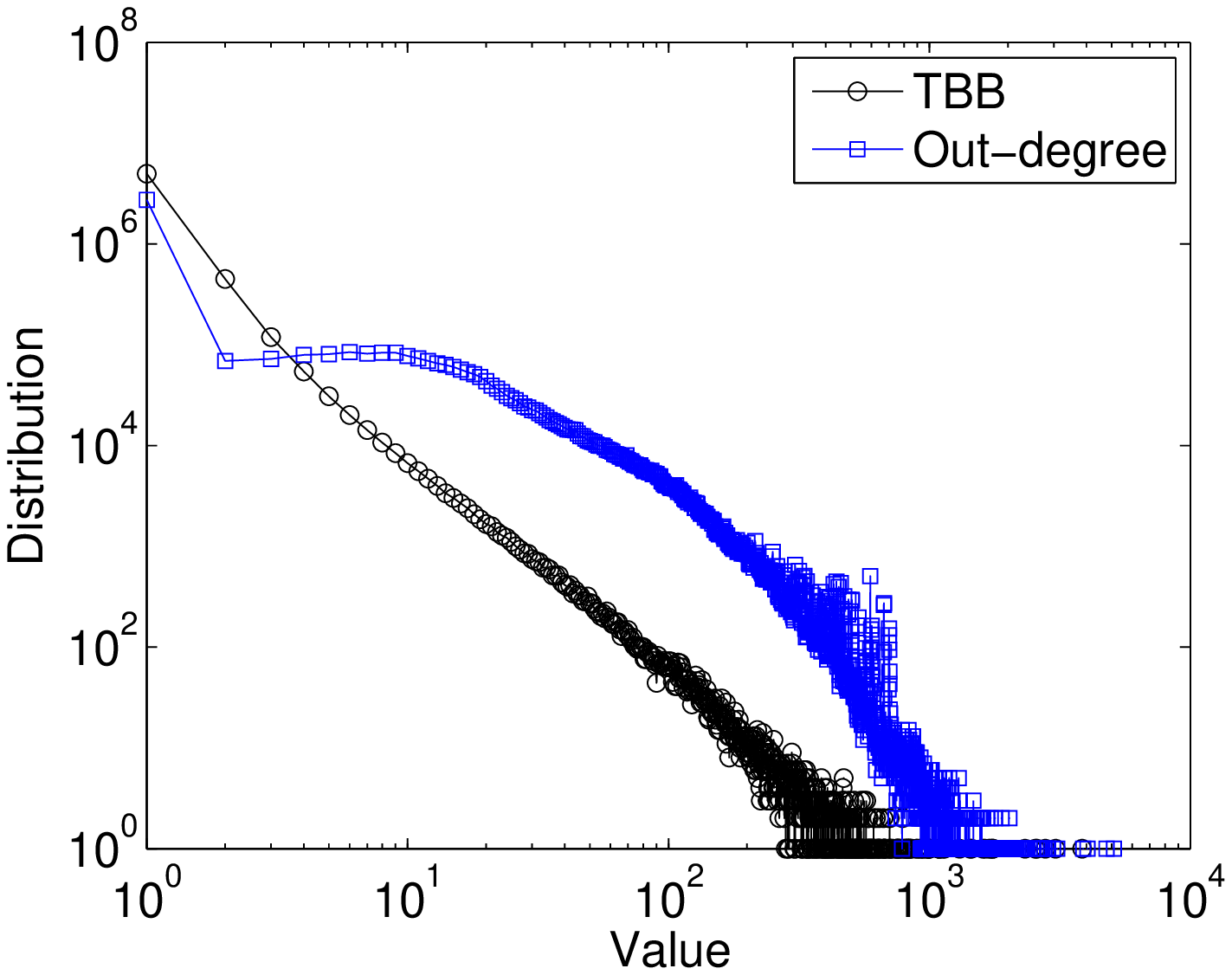}
  \end{minipage}%
   \hspace{1.8cm}
 \begin{minipage}[t]{0.2\textwidth}
 \vspace{0pt}
 \centering
 \includegraphics[width=2.2in]{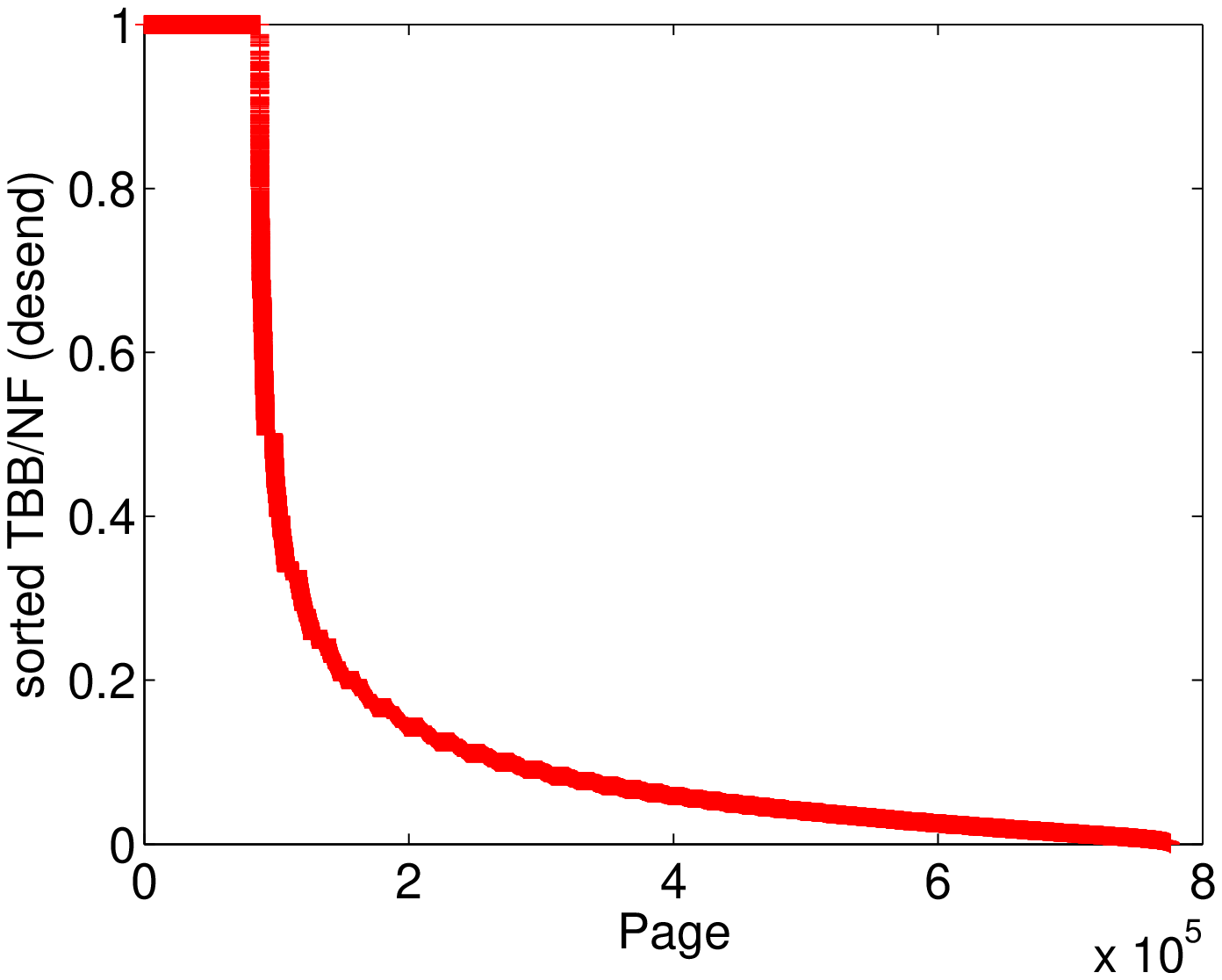}
 \end{minipage}%
\hspace{1.8cm}
 \begin{minipage}[t]{0.2\textwidth}
 \vspace{0pt}
 \centering
 \includegraphics[width=2.2in]{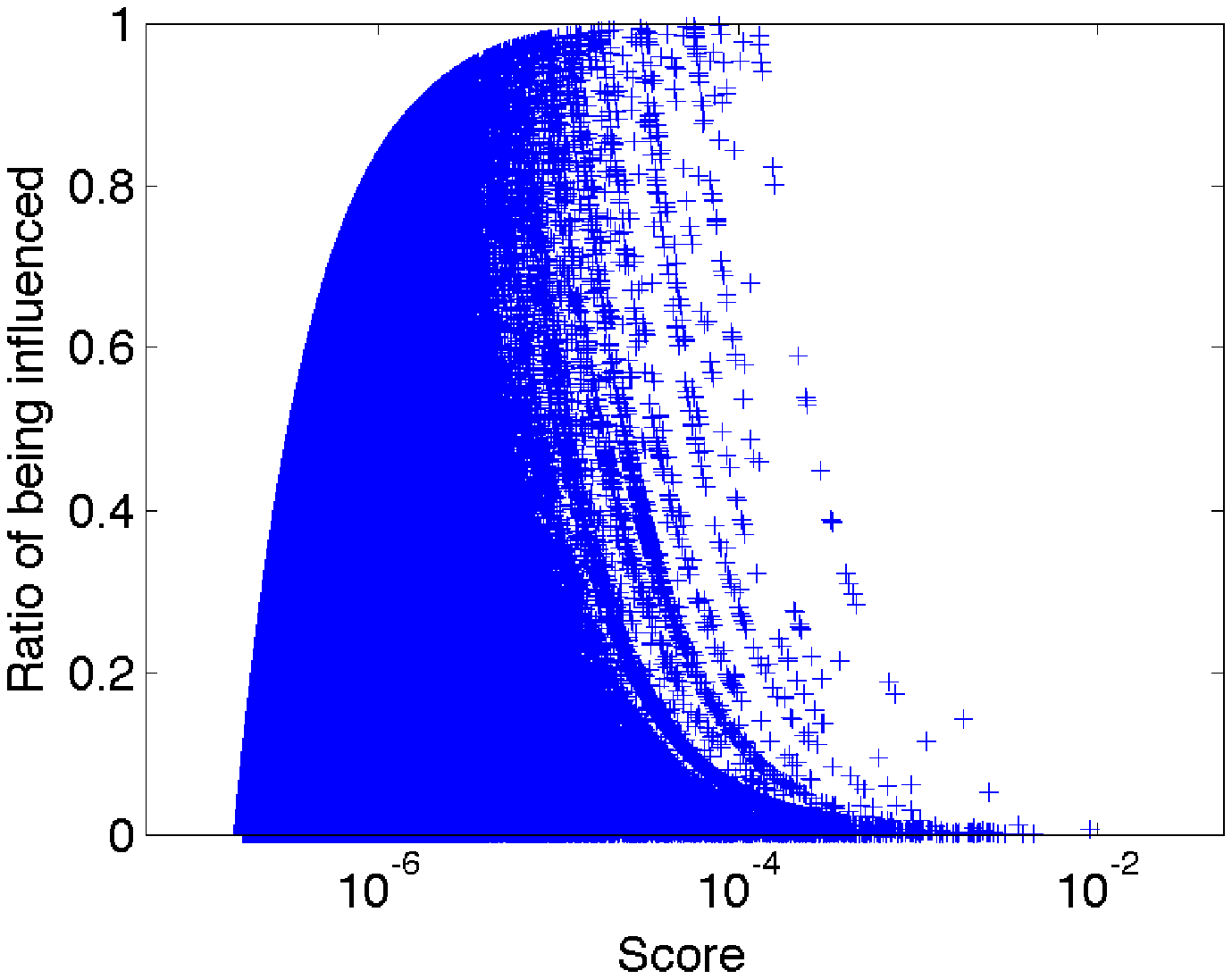}
 \end{minipage}%
  \caption{Left: Distributions of TBB and out-degree for the best backlink sources on Wikipedia.
  Middle: The sorted ratio between TBB and out-degree for the best backlink sources. 
  Right: The ratio of the score being influenced by the best backlink source for all pages (with a nonzero number of backlinks). $\lambda=0.1$.
 }\label{fig:wiki:maxranked:01:outdegrees_influencedratio_tbbnf}
 \end{figure*}


\begin{table*}[t]
\caption{The ordering of the best backlink sources according to the ``Times of being the Best Backlinks'' (TBB). $\lambda=0.1$.
}
\label{tab:maxranked_times08501}
\vskip 0.15in
\begin{center}
\begin{scriptsize}
\begin{tabular}{llllll}
\hline
Rank &Page &TBB &\#Links  & TBB/\#Links &Score\\
\hline
1 & List\_of\_endangered\_animal\_species & 3850 & 5097 & 0.755346 & 0.000003 \\
2 & Area\_codes\_in\_Germany & 3044 & 4032 & 0.754960 & 0.000084 \\
3 & Village\_Development\_Committee & 2784 & 2982 & 0.933602 & 0.000024 \\
4 & Index\_of\_India-related\_articles & 2581 & 4772 & 0.540863 & 0.000006 \\
5 & List\_of\_Tachinidae\_genera\_and\_species & 2547 & 2635 & 0.966603 & 0.000003 \\
6 & List\_of\_years & 2325 & 3885 & 0.598456 & 0.000105 \\
7 & List\_of\_auxiliaries\_of\_the\_United\_States\_Navy & 1751 & 1886 & 0.928420 & 0.000017 \\
8 & List\_of\_municipalities\_of\_Switzerland & 1749 & 1967 & 0.889171 & 0.000049 \\
9 & List\_of\_Bulbophyllum\_species & 1717 & 1742 & 0.985649 & 0.000001 \\
10 & List\_of\_municipalities\_and\_towns\_in\_Slovakia & 1692 & 2302 & 0.735013 & 0.000008 \\
11 & 2007 & 1632 & 1856 & 0.879310 & 0.001658 \\
12 & List\_of\_state\_leaders\_by\_year & 1604 & 2026 & 0.791708 & 0.000007 \\
13 & United\_States & 1448 & 1448 & 1.000000 & 0.009093 \\
14 & List\_of\_Roman\_Catholic\_dioceses\_(alphabetical) & 1434 & 2326 & 0.616509 & 0.000002 \\
15 & List\_of\_cutaneous\_conditions & 1306 & 1829 & 0.714051 & 0.000013 \\
16 & United\_Kingdom & 1140 & 1505 & 0.757475 & 0.003863 \\
17 & List\_of\_Olympic\_medalists\_in\_athletics\_(men) & 1127 & 2069 & 0.544708 & 0.000013 \\
18 & List\_of\_postal\_codes\_in\_Germany & 1092 & 1304 & 0.837423 & 0.000084 \\
19 & List\_of\_Vanity\_Fair\_caricatures & 1090 & 1815 & 0.600551 & 0.000001 \\
20 & Russia & 1087 & 1597 & 0.680651 & 0.001464 \\
21 & List\_of\_mantis\_genera\_and\_species & 1001 & 1057 & 0.947020 & 0.000025 \\
22 & List\_of\_United\_States\_Representatives\_from\_New\_York & 996 & 1342 & 0.742176 & 0.000025 \\
23 & List\_of\_extant\_baronetcies & 992 & 1168 & 0.849315 & 0.000033 \\
24 & Catholic\_Church & 990 & 1166 & 0.849057 & 0.000913 \\
25 & Index\_of\_statistics\_articles & 984 & 2026 & 0.485686 & 0.000016 \\
26 & 2006\_in\_music & 953 & 2162 & 0.440796 & 0.000017 \\
27 & List\_of\_prehistoric\_bony\_fish & 946 & 1049 & 0.901811 & 0.000007 \\
28 & England & 927 & 1551 & 0.597679 & 0.002462 \\
29 & List\_of\_EC\_numbers\_(EC\_2) & 920 & 1141 & 0.806310 & 0.000006 \\
30 & List\_of\_marine\_aquarium\_fish\_species & 898 & 1103 & 0.814143 & 0.000002 \\
31 & List\_of\_school\_districts\_in\_Texas & 891 & 943 & 0.944857 & 0.000007 \\
32 & Peerage\_of\_the\_United\_Kingdom & 856 & 1132 & 0.756184 & 0.000052 \\
33 & List\_of\_rivers\_of\_New\_Zealand & 843 & 919 & 0.917301 & 0.000019 \\
34 & List\_of\_chess\_players & 841 & 1577 & 0.533291 & 0.000003 \\
35 & London & 838 & 1323 & 0.633409 & 0.001363 \\
36 & List\_of\_subjects\_in\_Gray\'s\_Anatomy:\_IX.\_Neurology & 825 & 1426 & 0.578541 & 0.000037 \\
37 & 2004\_in\_music & 824 & 1780 & 0.462921 & 0.000018 \\
38 & Pronunciation\_of\_asteroid\_names & 823 & 910 & 0.904396 & 0.000030 \\
39 & List\_of\_destroyers\_of\_the\_United\_States\_Navy & 814 & 1030 & 0.790291 & 0.000012 \\
40 & California & 807 & 1192 & 0.677013 & 0.001055 \\
41 & List\_of\_EC\_numbers\_(EC\_1) & 799 & 1063 & 0.751646 & 0.000013 \\
42 & Cocaine & 789 & 1231 & 0.640942 & 0.000059 \\
43 & Sibley-Monroe\_checklist\_18 & 789 & 1844 & 0.427874 & 0.000000 \\
44 & List\_of\_United\_States\_Representatives\_from\_Pennsylvania & 789 & 1085 & 0.727189 & 0.000021 \\
45 & List\_of\_Digimon & 788 & 788 & 1.000000 & 0.000186 \\
46 & List\_of\_bird\_genera & 788 & 1929 & 0.408502 & 0.000005 \\
47 & List\_of\_subjects\_in\_Gray\'s\_Anatomy:\_XI.\_Splanchnology & 786 & 1116 & 0.704301 & 0.000046 \\
48 & List\_of\_State\_Routes\_in\_New\_York & 772 & 985 & 0.783756 & 0.000025 \\
49 & Italy & 762 & 1310 & 0.581679 & 0.001633 \\
50 & List\_of\_ICF\_Canoe\_Sprint\_World\_Championships\_medalists\_in\_men\'s\_kayak & 761 & 865 & 0.879769 & 0.000007 \\

\hline
\end{tabular}
\end{scriptsize}
\end{center}
\vskip -0.1in
\end{table*}

 \subsection{Convergence Studies}

Figure \ref{fig:prec_tau_maxranked} (Left) compares the convergence rates of MaxRank and PageRank, measured in terms of the 
(1-norm) errors between successive iterations. 
MaxRank is faster than PageRank.
The advantage is very significant for large $\lambda$.
MaxRank with $\lambda=0.1$ needs about $20$ iterations to reach the accuracy by PageRank at the $30$th iteration, 
while with $\lambda=0.9$ MaxRank only needs $3$ or $4$ iterations.

\subsection{Performance of MaxRank}
We compared the top list for the three algorithms, since it is usually the most important in practice.
Table \ref{tab:MaxRanked0850} shows the top $50$ pages by PageRank (MaxRank with $\lambda=0$).
Table \ref{tab:MaxRanked08501}, Table \ref{tab:MaxRanked08505} and Table \ref{tab:MaxRanked08509} show the top results of MaxRank with $\lambda=0.1, 0.5, 0.9$.
We also tested $\lambda=1$ for MaxRank, but the results were very poor. 
The intuition is that the found ``best backlinks'' are not good without considering the wisdom of the majority.
Note that in the tables,
``ISBN is short for ``International\_Standard\_Book\_Number'', and 
``Inter-Air-Trans-code'' is short for ``International\_Air\_Transport\_Association\_airport\_code''.


The top lists of these algorithms have some similarities, and also some differences.  
In order to measure the similarity between the algorithms, 
we performed comparisons using two measurements. 
One is the percentage of common pages in the top-$k$ lists by two algorithms, 
\[
c_{k} = \frac{\mbox{\# Common pages in top-$k$}}{ k} \in [0,1].
\] 
The other is Kendall's tau coefficient which measures the correlation in two rankings \citep{kendall}.
Here we care about whether MaxRank ranks the top-$k$ pages of PageRank in a consistent manner to PageRank, 
so the measure used is
\[
\tau_{k} = \frac{n_{k} }{ C_{k}^{2}} \in [0,1],
\]
where $n_{k}$ is the number of concordant orderings for every two pages from the top-$k$ pages of PageRank.
The results of $c_{k}$
are summarized in the middle plot of Figure \ref{fig:prec_tau_maxranked}, for $k=$5, 10, 30, 50, 80, 100, 300, 500, 800, 1000.
Notice that MaxRank performs remarkably similarly to PageRank for $\lambda=0.1$, due to that the effect of the best backlinks is made small.
In general, the smaller $\lambda$ is, the more similar ranking of MaxRank to that of PageRank.

The results of $\tau_{k}$ are summarized in the right plot of Figure \ref{fig:prec_tau_maxranked}.
Similarly,
the smaller the parameter $\lambda$ is, 
the more similar the ranking is to PageRank. 
In particular, MaxRank with $\lambda=0.1$ has a very similar $\tau_{k}$ to PageRank for all $k$.  
For large $\lambda$ like $0.9$ and $0.99$, 
MaxRank still has about $80\%$ similarities on average, 
and $65\%$ similarities at worst to PageRank.
The difference between the orderings of $0.9$ and $0.99$ for MaxRank is relatively small for all $k$.
This suggests that increasing $\lambda$ to large values close to $1$ produces stable rankings. 

%

\begin{figure*}[t]
\begin{minipage}[t]{0.2\textwidth}
\vspace{0pt}
\centering
\includegraphics[width=2.3in]{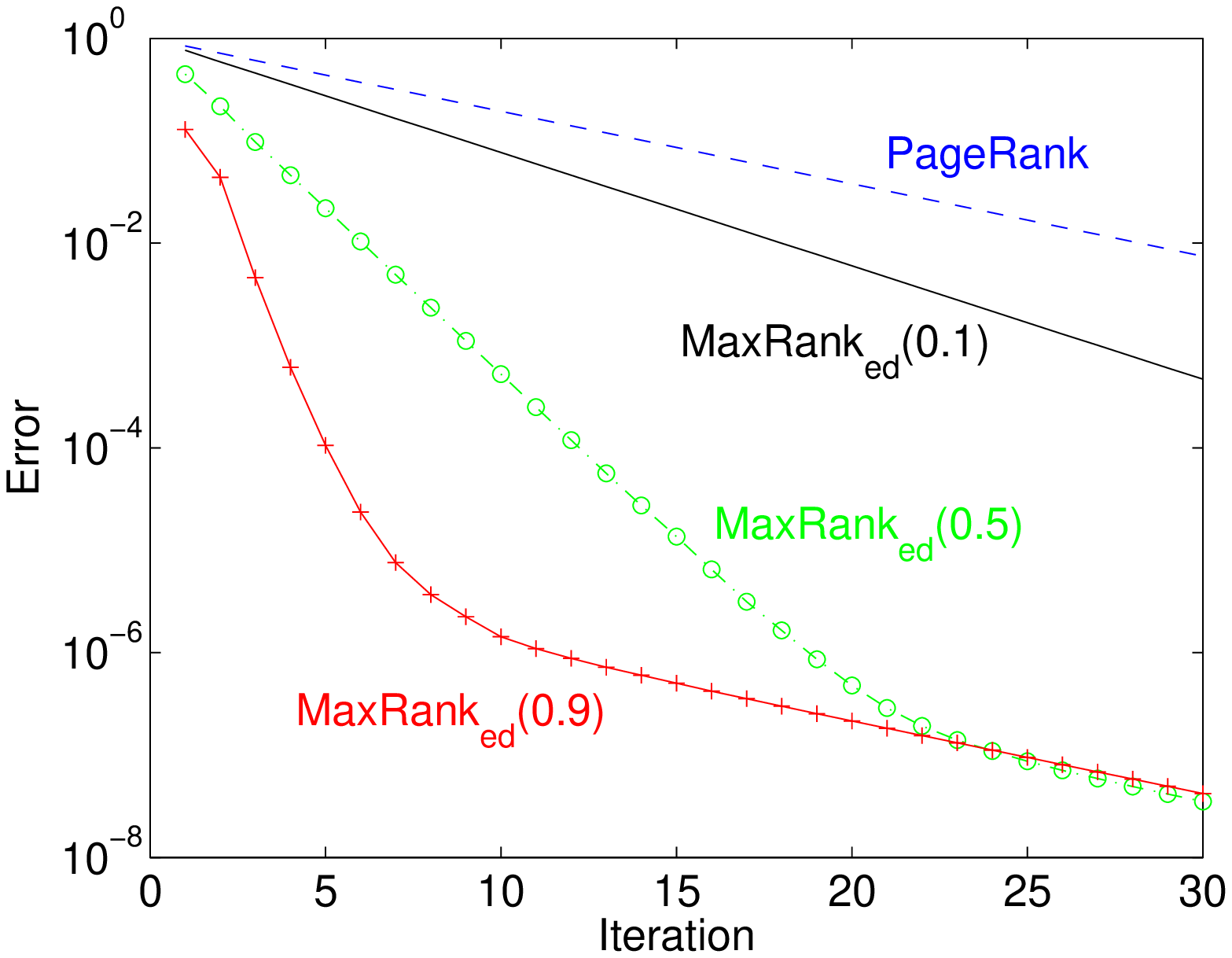}
\end{minipage}%
\hspace{1.8cm}
\begin{minipage}[t]{0.2\textwidth}
\vspace{0pt}
\centering
\includegraphics[width=2.2in]{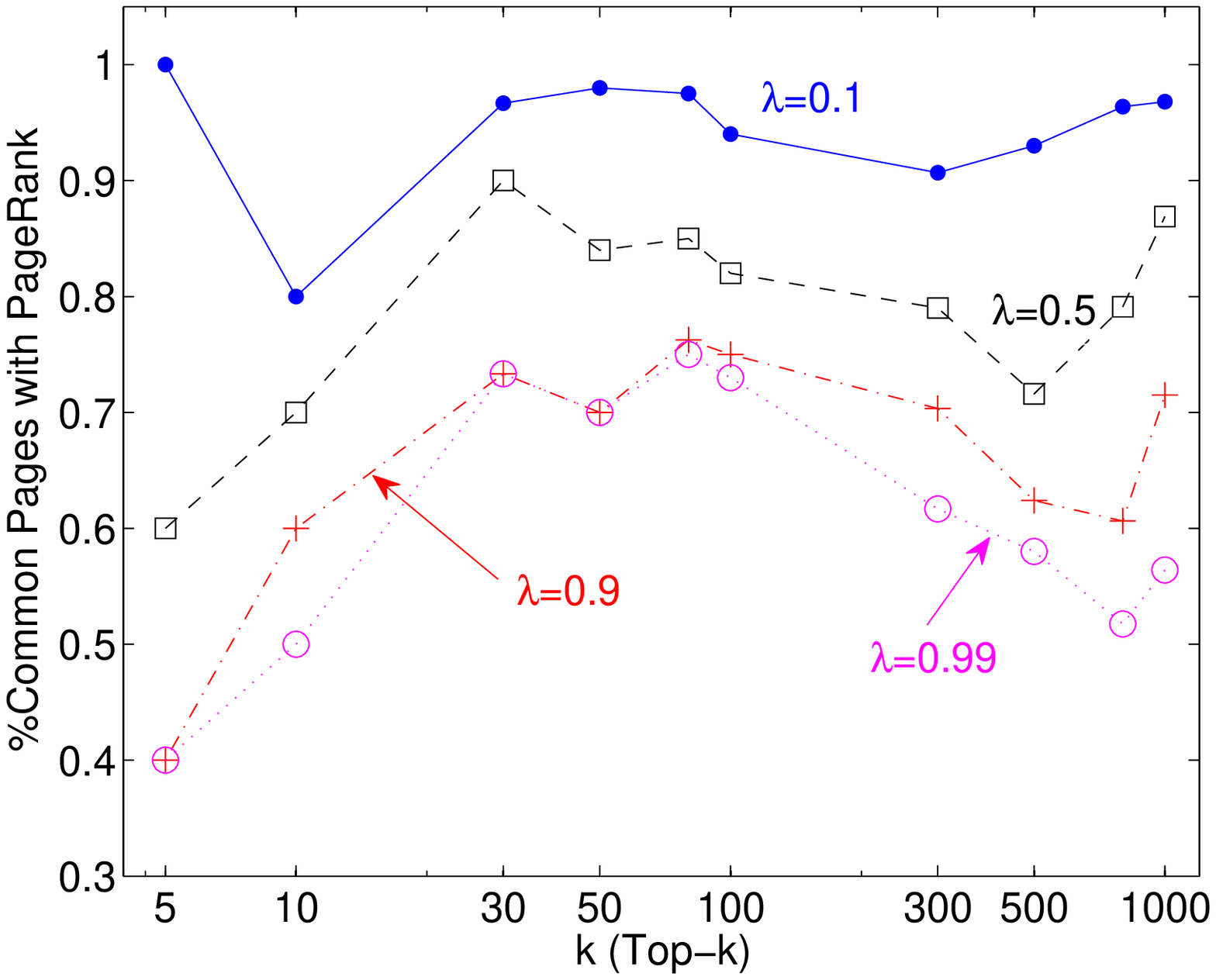}
\end{minipage}%
\hspace{1.8cm}
\begin{minipage}[t]{0.2\textwidth}
\vspace{0pt}
\centering
\includegraphics[width=2.3in]{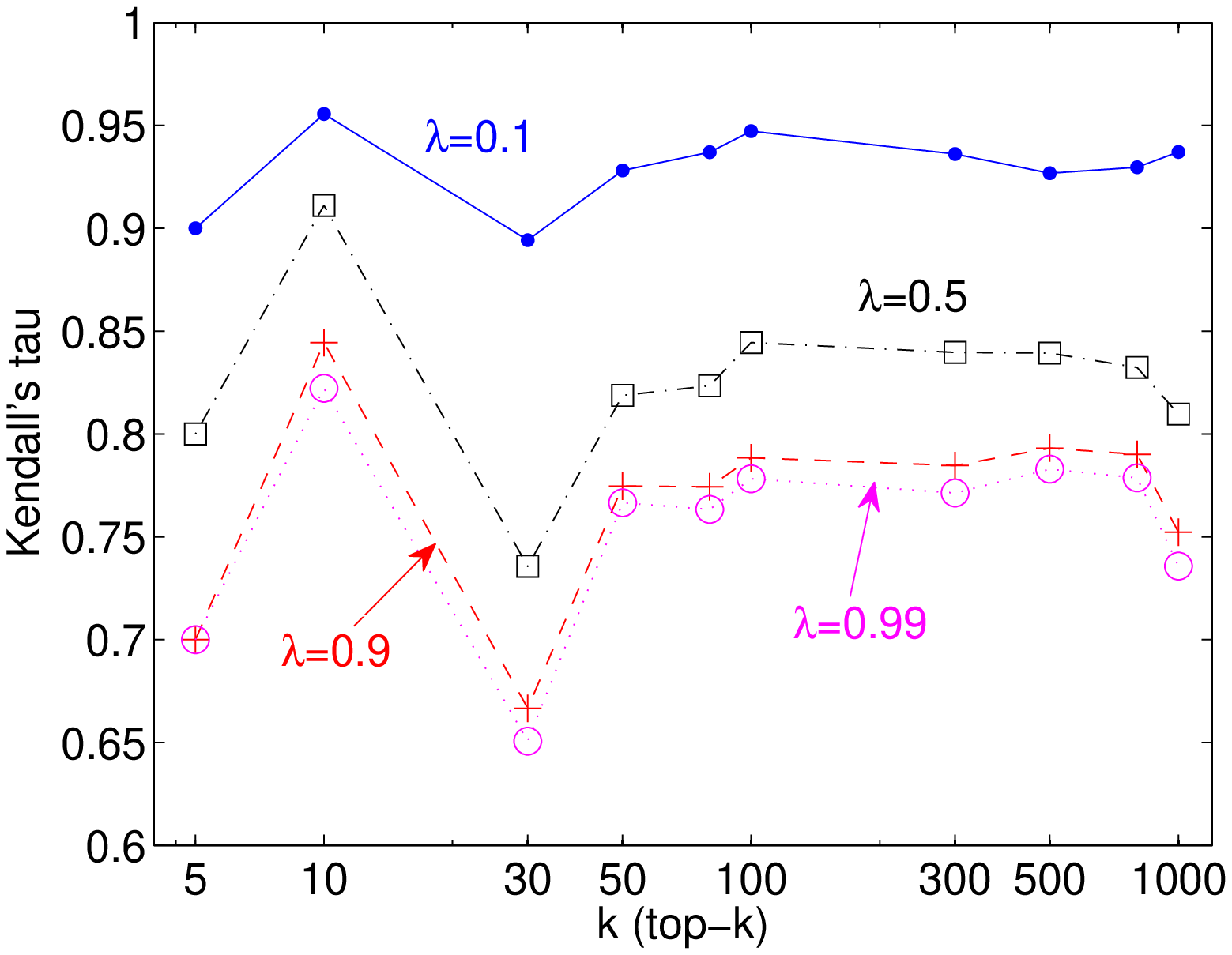}
\end{minipage}%
\caption{
Left: convergence rate comparisons of MaxRank and PageRank. 
Middle and Right: Percentage of the common pages and Kendall's tau for the top-$k$ lists of MaxRank and PageRank. 
}
\label{fig:prec_tau_maxranked}
\end{figure*}

\begin{table}[t]
\caption{{\small Top $50$ Wikipedia pages by MaxRank($0$) (PageRank)}. 
}
\label{tab:MaxRanked0850}
\vskip 0.15in
\begin{center}
\begin{scriptsize}
\begin{tabular}{llll}
\hline
Rank&Page & Score &Best backlink\\
\hline
1 & United\_States & 0.013911 & ``ISBN'' \\
2 & ``ISBN'' & 0.007283 & United\_States \\
3 & United\_Kingdom & 0.006135 & United\_States \\
4 & Wikimedia\_Commons & 0.005986 & Wiktionary \\
5 & Wiktionary & 0.004151 & Wikimedia\_Commons \\
6 & France & 0.004081 & United\_States \\
7 & Canada & 0.004049 & United\_States \\
8 & Biography & 0.003964 & Wiki \\
9 & Germany & 0.003860 & United\_States \\
10 & England & 0.003766 & United\_States \\
11 & Biological\_classification & 0.003562 & Arthropod \\
12 & English\_language & 0.003522 & United\_States \\
13 & Australia & 0.003426 & United\_States \\
14 & World\_War\_II & 0.003186 & United\_States \\
15 & Binomial\_nomenclature & 0.003176 & Biological\_classification \\
16 & Japan & 0.003091 & United\_States \\
17 & India & 0.003026 & United\_States \\
18 & Internet\_Movie\_Database & 0.002907 & Alexa\_Internet \\
19 & Abbreviation & 0.002882 & USA \\
20 & Music\_genre & 0.002844 & Poland \\
21 & Association\_football & 0.002766 & United\_States \\
22 & Europe & 0.002751 & United\_States \\
23 & Record\_label & 0.002734 & Music\_genre \\
24 & Italy & 0.002612 & United\_States \\
25 & 2007 & 0.002505 & Australia \\
26 & Russia & 0.002339 & United\_States \\
27 & London & 0.002152 & United\_Kingdom \\
28 & Spain & 0.002150 & United\_States \\
29 & Latin & 0.002077 & United\_States \\
30 & 2006 & 0.002030 & Germany \\
31 & Personal\_name & 0.001989 & Given\_name \\
32 & 2008 & 0.001919 & Germany \\
33 & New\_York\_City & 0.001850 & United\_States \\
34 & Netherlands & 0.001841 & United\_States \\
35 & Poland & 0.001829 & United\_States \\
36 & Sweden & 0.001825 & United\_States \\
37 & Scientific\_name & 0.001752 & Biological\_classification \\
38 & Public\_domain & 0.001752 & Wikimedia\_Commons \\
39 & Brazil & 0.001663 & United\_States \\
40 & Time\_zone & 0.001663 & United\_States \\
41 & China & 0.001658 & World\_War\_II \\
42 & French\_language & 0.001651 & United\_States \\
43 & World\_War\_I & 0.001643 & United\_States \\
44 & Catholic\_Church & 0.001623 & France \\
45 & California & 0.001620 & United\_States \\
46 & New\_Zealand & 0.001592 & United\_States \\
47 & Area & 0.001569 & United\_States \\
48 & 2005 & 0.001559 & France \\
49 & New\_York & 0.001554 & United\_States \\
50 & German\_language & 0.001515 & United\_States \\
\hline
\end{tabular}
\end{scriptsize}
\end{center}
\vskip -0.1in
\end{table}

\begin{table}[t]
\caption{Top $50$ Wikipedia pages by MaxRank, $\lambda=0.1$.
}
\label{tab:MaxRanked08501}
\vskip 0.15in
\begin{center}
\begin{scriptsize}
\begin{tabular}{llll}
\hline
Rank&Page & Score &Best backlink\\
\hline
1 & United\_States & 0.009093 & ``ISBN'' \\
2 & ``ISBN''& 0.004404 & United\_States \\
3 & United\_Kingdom & 0.003863 & United\_States \\
4 & Wikimedia\_Commons & 0.003614 & Wiktionary \\
5 & Biography & 0.003035 & Wiki \\
6 & Biological\_classification & 0.002773 & Arthropod \\
7 & Canada & 0.002626 & United\_States \\
8 & France & 0.002546 & United\_States \\
9 & Wiktionary & 0.002474 & Wikimedia\_Commons \\
10 & England & 0.002462 & United\_States \\
11 & Germany & 0.002452 & United\_States \\
12 & Binomial\_nomenclature & 0.002273 & Biological\_classification \\
13 & Australia & 0.002203 & United\_States \\
14 & English\_language & 0.002172 & United\_States \\
15 & Music\_genre & 0.002130 & Poland \\
16 & Record\_label & 0.002054 & Music\_genre \\
17 & Internet\_Movie\_Database & 0.002036 & Alexa\_Internet \\
18 & Japan & 0.002021 & United\_States \\
19 & India & 0.001968 & United\_States \\
20 & World\_War\_II & 0.001946 & United\_States \\
21 & Association\_football & 0.001910 & United\_States \\
22 & Abbreviation & 0.001786 & USA \\
23 & Europe & 0.001690 & United\_States \\
24 & 2007 & 0.001658 & Australia \\
25 & Italy & 0.001633 & United\_States \\
26 & Personal\_name & 0.001559 & Given\_name \\
27 & Russia & 0.001464 & United\_States \\
28 & London & 0.001363 & United\_Kingdom \\
29 & Spain & 0.001339 & United\_States \\
30 & 2006 & 0.001322 & Germany \\
31 & 2008 & 0.001274 & Germany \\
32 & Scientific\_name & 0.001236 & Biological\_classification \\
33 & Poland & 0.001206 & United\_States \\
34 & New\_York\_City & 0.001183 & United\_States \\
35 & Sweden & 0.001155 & United\_States \\
36 & Latin & 0.001140 & United\_States \\
37 & Netherlands & 0.001136 & United\_States \\
38 & Public\_domain & 0.001120 & Wikimedia\_Commons \\
39 & Time\_zone & 0.001077 & United\_States \\
40 & Brazil & 0.001076 & United\_States \\
41 & California & 0.001055 & United\_States \\
42 & Record\_producer & 0.001024 & Music\_genre \\
43 & China & 0.001023 & Japan \\
44 & New\_Zealand & 0.001007 & United\_States \\
45 & 2005 & 0.001006 & France \\
46 & World\_War\_I & 0.001004 & United\_States \\
47 & New\_York & 0.000999 & United\_States \\
48 & Romania & 0.000968 & United\_States \\
49 & Area & 0.000966 & United\_States \\
50 & Politician & 0.000964 & Video\_game \\
\hline
\end{tabular}
\end{scriptsize}
\end{center}
\vskip -0.1in
\end{table}

\begin{table}[t]
\caption{Top $50$ Wikipedia pages by MaxRank, $\lambda=0.5$.
}
\label{tab:MaxRanked08505}
\vskip 0.15in
\begin{center}
\begin{scriptsize}
\begin{tabular}{llll}
\hline
Rank&Page & Score &Best backlink\\
\hline
1 & United\_States & 0.002444 & ``ISBN'' \\
2 & Biography & 0.001198 & Genre \\
3 & Biological\_classification & 0.001103 & Arthropod \\
4 & ``ISBN'' & 0.000949 & United\_States \\
5 & United\_Kingdom & 0.000929 & United\_States \\
6 & Music\_genre & 0.000753 & Record\_producer \\
7 & Record\_label & 0.000726 & Music\_genre \\
8 & Wikimedia\_Commons & 0.000712 & Association\_football \\
9 & Canada & 0.000685 & United\_States \\
10 & England & 0.000671 & United\_States \\
11 & Binomial\_nomenclature & 0.000656 & Biological\_classification \\
12 & Personal\_name & 0.000640 & Given\_name \\
13 & Internet\_Movie\_Database & 0.000630 & Royal\_Navy \\
14 & Germany & 0.000607 & United\_States \\
15 & France & 0.000603 & United\_States \\
16 & Australia & 0.000558 & United\_States \\
17 & India & 0.000543 & United\_States \\
18 & Association\_football & 0.000533 & United\_States \\
19 & Japan & 0.000523 & United\_States \\
20 & Wiktionary & 0.000510 & Wikimedia\_Commons \\
21 & English\_language & 0.000502 & United\_States \\
22 & 2007 & 0.000454 & Australia \\
23 & Abbreviation & 0.000436 & USA \\
24 & Arthropod & 0.000423 & Lepidoptera \\
25 & World\_War\_II & 0.000415 & United\_States \\
26 & Italy & 0.000392 & United\_States \\
27 & Europe & 0.000373 & United\_States \\
28 & Studio\_album & 0.000370 & Music\_genre \\
29 & Politician & 0.000364 & Video\_game \\
30 & Record\_producer & 0.000363 & Music\_genre \\
31 & 2008 & 0.000349 & Germany \\
32 & Russia & 0.000343 & United\_States \\
33 & 2006 & 0.000340 & Germany \\
34 & London & 0.000332 & United\_Kingdom \\
35 & Scientific\_name & 0.000326 & Biological\_classification \\
36 & Poland & 0.000318 & United\_States \\
37 & Spain & 0.000313 & United\_States \\
38 & Romania & 0.000310 & United\_States \\
39 & New\_York\_City & 0.000294 & United\_States \\
40 & Public\_domain & 0.000293 & Wikimedia\_Commons \\
41 & Time\_zone & 0.000288 & United\_States \\
42 & Brazil & 0.000282 & United\_States \\
43 & Sweden & 0.000280 & United\_States \\
44 & California & 0.000278 & United\_States \\
45 & Television & 0.000270 & United\_Kingdom \\
46 & Drainage\_basin & 0.000267 & New\_York\_City \\
47 & Netherlands & 0.000259 & United\_States \\
48 & 2005 & 0.000255 & France \\
49 & New\_York & 0.000249 & United\_States \\
50 & New\_Zealand & 0.000249 & United\_States \\
\hline
\end{tabular}
\end{scriptsize}
\end{center}
\vskip -0.1in
\end{table}

\begin{table}[t]
\caption{
Top $50$ Wikipedia pages by MaxRank, $\lambda=0.9$.
}
\label{tab:MaxRanked08509}
\vskip 0.15in
\begin{center}
\begin{scriptsize}
\begin{tabular}{lllll}
\hline
Rank&Page & Score &Best backlink\\
\hline
1 & United\_States & 0.000311 & United\_Kingdom \\
2 & Biography & 0.000194 & Autobiography \\
3 & Biological\_classification & 0.000175 & Arthropod \\
4 & Music\_genre & 0.000111 & Record\_producer \\
5 & United\_Kingdom & 0.000109 & United\_States \\
6 & Record\_label & 0.000106 & Music\_genre \\
7 & Personal\_name & 0.000105 & Given\_name \\
8 & ``ISBN'' & 0.000099 & United\_States \\
9 & England & 0.000088 & United\_States \\
10 & Internet\_Movie\_Database & 0.000087 & Royal\_Navy \\
11 & Canada & 0.000085 & United\_States \\
12 & Binomial\_nomenclature & 0.000079 & Biological\_classification \\
13 & Arthropod & 0.000075 & Lepidoptera \\
14 & India & 0.000073 & United\_States \\
15 & Germany & 0.000072 & United\_States \\
16 & France & 0.000069 & United\_States \\
17 & Australia & 0.000067 & United\_States \\
18 & Association\_football & 0.000063 & United\_States \\
19 & Japan & 0.000062 & United\_States \\
20 & Wikimedia\_Commons & 0.000061 & Arthropod \\
21 & Politician & 0.000059 & Video\_game \\
22 & Studio\_album & 0.000058 & Music\_genre \\
23 & Abbreviation & 0.000058 & USA \\
24 & English\_language & 0.000057 & United\_States \\
25 & 2007 & 0.000057 & Australia \\
26 & Record\_producer & 0.000054 & Music\_genre \\
27 & Wiktionary & 0.000052 & Wikimedia\_Commons \\
28 & Geocode & 0.000048 & UN/LOCODE \\
29 & UN/LOCODE & 0.000047 & ``Inter-Air-Trans-code'' \\
30 & Italy & 0.000046 & United\_States \\
31 & 2008 & 0.000044 & Germany \\
32 & Romania & 0.000044 & United\_States \\
33 & Lepidoptera & 0.000042 & Moth \\
34 & World\_War\_II & 0.000042 & United\_States \\
35 & 2006 & 0.000040 & Germany \\
36 & Drainage\_basin & 0.000039 & New\_York\_City \\
37 & London & 0.000039 & United\_Kingdom \\
38 & Television & 0.000039 & United\_Kingdom \\
39 & Europe & 0.000039 & United\_States \\
40 & Poland & 0.000038 & United\_States \\
41 & Time\_zone & 0.000038 & United\_States \\
42 & Russia & 0.000038 & United\_States \\
43 & Genus & 0.000037 & Biological\_classification \\
44 & Conservation\_status & 0.000036 & IUCN\_Red\_List \\
45 & Spain & 0.000036 & United\_States \\
46 & Public\_domain & 0.000035 & Wikimedia\_Commons \\
47 & Brazil & 0.000035 & United\_States \\
48 & New\_York\_City & 0.000035 & United\_States \\
49 & California & 0.000035 & United\_States \\
50 & IUCN\_Red\_List & 0.000034 & Conservation\_status \\
\hline
\end{tabular}
\end{scriptsize}
\end{center}
\vskip -0.1in
\end{table}

\section{Discussion}\label{sec:discussion}

The size of the Web creates a large computation burden for PageRank.
Currently most large commercial search engines index $10$ to $100$ billion pages. 
However, the Web is actually much larger, e.g.,
there were already 1 trillion unique URLs in 2008 according to Google. 
\footnote{http://googleblog.blogspot.com/2008/07/we-knew-the-Web-was-big.html}
Computing a single, global PageRank for the Web is already very demanding. 
Page et. al. used power iteration to take advantage of the sparse nature of the link structure of the Web \citep{Page98}.
Kamvar et al. proposed an adaptive method which monitors the change in the PageRank update for each page, and removes those pages whose update no longer changes \citep{TaherAda}. 
Methods proposed in \citep{Tahereff}, \citep{pr_block}, and \citep{PR_survey} take advantage of the structure of link matrices and compute PageRank block-wise.
Other linear system solvers were also used to update PageRank.
For example, 
Kamvar et. al. used extrapolation methods \citep{PR_extra};
Gleich et. al. proposed an inner-outer iteration procedure \citep{gleichthesis,PR_inout}, which essentially applies preconditioning incrementally; and
Langville and Meyer, and Ipsen and Kirklad studied aggregation/disaggregation methods \citep{amy06,iad_conv}.

It becomes more severe when one wants to computes many score vectors, 
such as many personalized PageRank \citep{Page98,jeh03}, context-sensive or query-dependent scores \citep{davood,qdpr,taher02}, which has numerous applications in search systems. 
Jeh and Widom proposed a scalable method by pre-computing some components of PageRank and saving them for efficient future computation \citep{jeh03}. 
Fogaras et. al. simulated a number of random walks and used Monte-Carlo methods to estimate personalized PageRank vectors \citep{ppr_fully}.

The computation of PageRank is very demanding. 
Thus distributed, parallel computation becomes necessary for large graphs.
The methods in \citep{graphagg04,amy_agg04,pr_agg} partition the whole graph into disjoint subgraphs, 
and then compute local PageRank for each subgraph. 
The local PageRanks are then merged, considering the links between the subgraphs.
The methods in \citep{pr_lin,pr_parallel} feature in the use of advanced linear system solvers.
Some researchers also considered efficient hardware structures, such as specially optimized circuits \citep{pr_fpga}.  
For excellent surveys of PageRank, please refer to \citep{PR_survey,pr_survey3,revisit06,amy06,liu07,pr_survey2}.

Our method is very different from these efforts in literature, though
it should be noted that these techniques also apply to our algorithms in a straightforward way. 
Our work takes advantage of the influential links in updating PageRank-style scores. 
We hope by doing so one can gain speedup in convergence and the performance is similar or comparable to PageRank.

\section{Conclusion}\label{sec:con}
The observation leading to this paper is that there exists one or more valuable backlinks for a page with a nonzero number of backlinks. 
We show that by leveraging the best backlinks a recursive update 
can have a much faster convergence than PageRank.
The algorithm has a parameter $\lambda\in[0,1]$, which controls the effects of the best backlinks discovered. 
When $\lambda=0$, the algorithm reduces to PageRank. 
Empirical results show that with large $\lambda$s (but smaller than $1$) the new algorithm converges dramatically faster,
but the results still have $80\%$ similarities to PageRank on average (measured with Kendall's tau).
Thus our algorithm is advantageous for ranking in large search systems, where the computation of many personalized, query-dependent or context-sensitive score vectors is demanding.

Results on Wikipedia show that the number of unique best backlink sources (the so-called ``core'' in the paper) is only about $13.5\%$ of the total number of pages. 
However, the sum of their scores is more than a half (about $53.1\%$) of the total scores. 
We propose to measure a source in the core by the times of being the best backlinks (TBB) and the ratio between TBB and the out-degree.
Results show that TBB follows an exponential distribution with a similar exponent to the distribution of the out-degrees. 
With these two measures,
the number of competitive backlink sources is very small. 
Results also show that a top authority is not easily influenced by the best backlink source.

 \begin{scriptsize}
 
\end{scriptsize}

\pagenumbering{arabic}
\setcounter{page}{1}


\begin{thebibliography}{}

\bibitem[Arasu et~al., 2001]{searchweb}
Arasu, A., Cho, J., Garcia-Molina, H., Paepcke, A., and Raghavan, S. (2001).
\newblock Searching the {Web}.
\newblock {\em ACM Transactions on Internet Technology}, 1(1):2--43.

\bibitem[Berkhin, 2005]{PR_survey}
Berkhin, P. (2005).
\newblock A survey on {PageRank} computing.
\newblock {\em Internet Mathematics}, 2(1):73--120.

\bibitem[Bianchini et~al., 2005]{pr_survey3}
Bianchini, M., Gori, M., and Scarselli, F. (2005).
\newblock Inside pagerank.
\newblock {\em ACM Transactions on Internet Technologies}, 5(1):92--128.

\bibitem[Brinkmeier, 2006]{revisit06}
Brinkmeier, M. (2006).
\newblock {PageRank} revisited.
\newblock {\em ACM Transactions on Internet Technology}, 6(3):282--301.

\bibitem[Broder et~al., 2004]{graphagg04}
Broder, A.~Z., Lempel, R., Maghoul, F., and Pedersen, J. (2004).
\newblock Efficient {PageRank} approximation via graph aggregation.
\newblock {\em WWW}.

\bibitem[Fogaras and R{\'a}cz, 2004]{ppr_fully}
Fogaras, D. and R{\'a}cz, B. (2004).
\newblock Towards scaling fully personalized {PageRank}.
\newblock {\em WAW}.

\bibitem[Gleich and Zhukov, 2005]{pr_parallel}
Gleich, D. and Zhukov, L. (2005).
\newblock Scalable computing for power law graphs: Experience with parallel
  pagerank.
\newblock Technical report, Yahoo! Research Labs Technical Report.

\bibitem[Gleich et~al., 2004]{pr_lin}
Gleich, D., Zhukov, L., and Berkhin, P. (2004).
\newblock Fast parallel {PageRank}: A linear system approach.
\newblock Technical report, Yahoo! Research Labs Technical Report,
  YRL-2004-038.

\bibitem[Gleich, 2009]{gleichthesis}
Gleich, D.~F. (2009).
\newblock {\em Models and Algorithms for PageRank sensitivity}.
\newblock PhD thesis, Stanford University.

\bibitem[Gleich et~al., 2010]{PR_inout}
Gleich, D.~F., Gray, A.~P., Greif, C., and Lau, T. (2010).
\newblock An inner-outer iteration for {PageRank}.
\newblock {\em SIAM Journal of Scientific Computing}, 32(1):349--371.

\bibitem[Haveliwala, 1999]{Tahereff}
Haveliwala, T. (1999).
\newblock Effcient computation of {PageRank}.
\newblock Technical Report 1999--31, Database Group, Computer Science
  Department, Stanford University.

\bibitem[Haveliwala, 2005]{Taherthesis}
Haveliwala, T. (2005).
\newblock {\em Context-Sensitive Web Search}.
\newblock PhD thesis, Stanford University.

\bibitem[Haveliwala, 2002]{taher02}
Haveliwala, T.~H. (2002).
\newblock Topic-sensitive {PageRank}.
\newblock {\em WWW}.

\bibitem[Ipsen and Kirklad, 2004]{iad_conv}
Ipsen, C.~F. and Kirklad, S. (2004).
\newblock Convergence analysis of an improved pagerank algorithm.
\newblock Technical report, NCSU CRSC Technical Report.

\bibitem[Jeh and Widom, 2003]{jeh03}
Jeh, G. and Widom, J. (2003).
\newblock Scaling personalized web search.
\newblock {\em WWW}.

\bibitem[Kamvar et~al., 2003a]{TaherAda}
Kamvar, S., Haveliwala, T., and Golub, G. (2003a).
\newblock Adaptive methods for the computation of {PageRank}.
\newblock Technical report, Stanford University.

\bibitem[Kamvar et~al., 2003b]{pr_block}
Kamvar, S., Haveliwala, T., Manning, C., and Golub., G. (2003b).
\newblock Exploiting the block structure of the web for computing {{PageRank}}.
\newblock Technical report, Stanford University.

\bibitem[Kamvar et~al., 2003c]{PR_extra}
Kamvar, S.~D., Haveliwala, T.~H., Manning, C.~D., and Golub, G.~H. (2003c).
\newblock Extrapolation methods for accelerating {PageRank} computations.
\newblock {\em WWW}.

\bibitem[Kendall, 1975]{kendall}
Kendall, S. M.~G. (1975).
\newblock {\em Rank Correlation Methods}.
\newblock Charles Griffin \& Company Limited.

\bibitem[Kleinberg, 1998]{HITS}
Kleinberg, J. (1998).
\newblock Authoritative sources in a hyperlinked environment.
\newblock {\em SODA}.

\bibitem[Langville and Meyer, 2004a]{amy_agg04}
Langville, A. and Meyer, C. (2004a).
\newblock Updating {{PageRank}} with iterative aggregation.
\newblock {\em WWW}.

\bibitem[Langville and Meyer, 2004b]{deeper04}
Langville, A.~N. and Meyer, C.~D. (2004b).
\newblock Deeper inside {PageRank}.
\newblock {\em Internet Mathematics}, 1.

\bibitem[Langville and Meyer, 2006]{amy06}
Langville, A.~N. and Meyer, C.~D. (2006).
\newblock {\em Google's {PageRank} and Beyond: The Science of Search Engine
  Rankings}.
\newblock Princeton University Press.

\bibitem[Liu, 2007]{liu07}
Liu, B. (2007).
\newblock {\em Web Data Mining: Exploring Hyperlinks, Contents and Usage data}.
\newblock Springer.

\bibitem[Page et~al., 1998]{Page98}
Page, L., Brin, S., Motwani, R., and Winograd, T. (1998).
\newblock The {PageRank} citation ranking: Bringing order to the web.
\newblock Technical report, Stanford University.

\bibitem[Rafiei and Mendelzon, 2000]{davood}
Rafiei, D. and Mendelzon, A.~O. (2000).
\newblock What is this page known for? computing web page reputations.
\newblock {\em WWW}.

\bibitem[Richardson and Domingos, 2002]{qdpr}
Richardson, M. and Domingos, P. (2002).
\newblock The intelligent surfer: Probabilistic combination of link and content
  information in {PageRank}.
\newblock {\em NIPS}.

\bibitem[Sargolzaei and Soleymani, 2010]{pr_survey2}
Sargolzaei, P. and Soleymani, F. (2010).
\newblock Pagerank problem, survey and future research directions.
\newblock {\em International Mathematical Forum}, 5(19):937--956.

\bibitem[Seamas et~al., 2008]{pr_fpga}
Seamas, M., Dermot, G., and McElroy, C. (2008).
\newblock Towards an {FPGA} solver for the pagerank eigenvector problem.
\newblock {\em Advances in Parallel Computing}, 15.

\bibitem[Zhu et~al., 2005]{pr_agg}
Zhu, Y., Ye, S., and Li, X. (2005).
\newblock Distributed {PageRank} computation based on iterative
  aggregation-disaggregation methods.
\newblock {\em CIKM}.

\end{thebibliography}

\end{document}